\theoremstyle{theorem}
\newtheorem{prop}{Proposition}
\theoremstyle{definition}
\newtheorem*{definicja}{Definition}
\renewcommand{\theequation}{\arabic{section}.\arabic{equation}}
\newcommand{\beq}[1]{\begin{equation}\label{#1}}
\newcommand{\eeq}{\end{equation}}
\newcommand{\pder}[2]{\frac{\partial #1}{\partial #2}}
\title{SPECTRA AND BIFURCATIONS}
\author{P.~Grochowski$^{a}$ \and W.~Kaniowski$^{b}$
\and B.~Mielnik$^{c}$}
\date{}
\begin{document}
\maketitle
\begin{center}
\textit{$^{a}$Interdisciplinary Centre for Mathematical
and Computational Modelling,\\
Warsaw University,\\
Pawi\'nskiego 5a, 02-106 Warsaw, Poland.\\[0.3cm]
$^{b}$Department of Biophysics,
Warsaw University,\\
\.Zwirki i Wigury 93, 02-089 Warsaw, Poland.\\[0.3cm]
$^{c}$Departamento de Fisica, CINVESTAV,\\
A.P.~14-740, 07000 M\'{e}xico DF, M\'exico.}
\end{center}

\begin{abstract}
The concept of spectrum for a class of  
non-linear wave equations is studied.   Instead  of  looking 
for stability, the key to the spectral structure is found 
in the instability phenomena (bifurcations). This  aspect 
is best seen in the `classical model' of  the  non-linear 
wave mechanics. The solitons (macro-localizations) are a 
part  of   the   non-linear   spectral   problem;   their 
bifurcations reflect the dynamical symmetry breaking.
The computer simulations suggest that the bifurcations of the
asymptotic behaviour occur also for the general, non-stationary
states. A~phenomenon of the soliton splitting is observed.
\end{abstract}

\section{Introduction}
\setcounter{equation}{0}

     In the last decades one can observe a renewed interest in 
non-linear wave equations \cite{Haa,Kib,IBB,Wei,Cza0,Cza1,Do1,Do2,Dodo,Die,Gra}.
The simplest class of non-linear  
Schr\"{o}dinger's equations creates a temptation to formulate 
a ``quantum mechanics" with 
$\psi \in L^{2}(R^{3})$, $|\psi|^{2}$~as the 
probability density \cite{IBB,Cza1,Mie} but with the superposition 
principle broken. In turn, the non-linear Schr\"{o}dinger's equation 
with an atypical kinetic energy \cite{Mie,Aro}:
\beq{i1}
i\pder{\psi}{t} = -\frac{1}{2} \Delta (|\psi|^{\alpha} \psi) 
+V({\bf x},t)\psi
\eeq
$(\alpha \in R)$, has the absolutely conservative integral: 
\beq{i2}
N[\psi] = \int_{R^{3}}^{}|\psi|^{2+2\alpha}d_{3}x = {\rm const}
\eeq
suggesting a statistical theory with $\psi \in L^{p}(R^{3}),\;
p=2+2\alpha$ and  with $|\psi|^{p}$ defining the probability density. 
A slightly different equation: 
\beq{i3}
i\pder{\psi}{t} = -\frac{1}{2} |\psi|^{-\alpha}
\Delta (|\psi|^{\alpha} \psi) +V({\bf x},t)\psi
\eeq
admits the conservative integral: 
\beq{i4}
N[\psi] = \int_{R^{3}}^{}|\psi|^{2+4\alpha}d_{3}x = {\rm const}
\eeq
asking for the space of states $L^{k}(R^{3})$ $(k=2+4\alpha$).  
Curiously, (\ref{i3}) has the 
same  spectrum  as  the  conventional  Schr\"{o}dinger  equation
(a counterexample against the belief that the non-linearity can be
tested by observing the spectral frequencies). 
The  eqs. (\ref{i1}-3) are not Galileo covariant; 
however, cases of Galileo invariant wave mechanics  were 
recently found  by  Doebner and Goldin \cite{Do1,Do2}; see also   
Dodonov and  Mizrahi \cite{Dodo}, and Natterman \cite{Nat}.  
One  of hopes in the
non-linear schemes is that they might help to understand the collapse 
of the wave packets (see e.g.~Gisin \cite{Gis}),
 but one of obstacles is that the  
non-linearity  could generate faster than light signals in a  
non-linear  analogue  of  EPR arrangements (a disquieting observation 
of  Gisin  \cite{Gisin}  and  Czachor \cite{Cza0} leaves the 
`fundamental non-linearity' in defense, but  not in defeat!).

     Apart of fundamental reasons, the non-linear wave eqs. 
might be of practical interest,  as  tools  to  describe  dense 
clouds  of 
interacting   quanta.   To   this  subject   belong   all   
variants   of `self-consistent' wave  mechanics \cite{Self,Cap}, 
the theories which model the feedback interactions of a micro-object  
with  a mezoscopic or  macroscopic  medium, in molecular \cite{Ber,Les} or solid 
state  physics  \cite{Die,Gra}.  In  all  
these  schemes,  the `localizations' (bound states) bring a relevant 
information. However, some structural problems are still open.

     One of unsolved questions in non-linear  theories  is  the  
problem  of spectrum. Is it pertinent to  define  spectra  for  
non-linear operators \cite{Bra,IBB,Cza1}?  Looking  for  
similarities  between  the   linear   and non-linear cases, one 
might be tempted by  the  idea  of  stability.  In fact, in the 
orthodox quantum theory the bound states are stationary and stable; 
in non-linear case the same concerns solitons; a lot of  authors 
marvel about the soliton ability to survive collisions!  Yet, the  
story has its opposite side,  which  seems  to be as relevant for  the  
non-linear spectral problem. 

      One of curious aspects of the Schr\"{o}dinger's eigenvalue 
equation in 1-space  dimension  is  the  possibility  of 
reinterpreting  the  space coordinate $x$ as the ``time" ($x=t$), 
the wave function $\psi$ as the  coordinate 
and the derivative $\psi '$ as the  momentum  of  a  certain  
classical  point particle \cite{Pru,Man,Col,San,book,Mre}. 
What one  obtains  is  a  classical 
model for  quantum phenomena or vice versa \cite{Col,Wo1,Wo2} (see, e.g.,
the interpretation of the Saturn  rings  as   spectral 
bands \cite{Avr}). 
It turns out that the `classical
image' throws also some new light onto the non-linear spectral  
problem. 
It shows that  the  eigenstates  correspond  to  bifurcations 
\cite{Bra,Rab,Eas,Tur,Sch,Hef}; in a sense, they are ``born of 
instability"! 

Our paper is precisely dedicated to the instability (bifurcation) 
aspects of the one dimensional spectral problem. We shall show that they provide the 
most natural bridge between the linear and non-linear cases, leading also to the 
easiest numerical algorithm to determine 
the spectral values. Among many  models  which can be used to illustrate this, 
we have selected the simplest  one,  
with  some hope that observations presented below might turn 
generally useful. 

\section{The `classical portrait'.}
\setcounter{equation}{0}

We shall consider the non-linear Schr\"{o}dinger's equation in 
1-space dimension: 
\beq{p1}
i\frac{d}{dt}\Psi(x,t) =
-\frac{1}{2} \frac{d^{2}\Psi}{dx^{2}} + V(x)\Psi + 
\varepsilon f(|\Psi|^{2})\Psi
\eeq
where $V(x)$ is an external potential and $f()$ a given function defining 
the non-linearity.  The  stationary  solutions  $\Psi (x,t)$ = 
exp $(-i E t ) \psi(x)$ then fulfill:
\beq{p2}
-\frac{1}{2} \frac{d^{2}\psi}{dx^{2}} + [V(x)-E]\psi + \varepsilon f(|\psi|^{2})\psi =0
\eeq
The simple harted  analogue  of  an  eigenstate  can  be  introduced 
without difficulty \cite{Bra,Rab,Eas,Tur,Sch,Hef}.  
Whenever  $\psi$ in (\ref{p2}) is localized, i.e. vanishes for 
$x \rightarrow \pm \infty$, then $\psi$ will be called  a 
`localized state' and $E$ will be interpreted as a discrete frequency 
eigenvalue of (\ref{p2}). We use the traditional symbol $E$ but we speak 
about the frequency instead of energy eigenvalue to remind that for the 
non-linear Schr\"{o}dinger's equation (\ref{p2}) the parameter $E$ 
may have no energy interpretation (this point is widely discussed in 
\cite{IBB,Cza1}). 

Denoting now $x=t$, $\psi=q_{1}+iq_{2}$, $\psi^{\prime}=p_{1}+ip_{2}$ 
one immediately reduces 
(\ref{p2}) to the Newton's equation of motion: 
\beq{p3}
\frac{d{\bf q}}{dt} = {\bf p},~~~
\frac{d{\bf p}}{dt} = 2[V(t)-E]{\bf q} + 2\varepsilon f({\bf q}^{2}){\bf q}.
\eeq
for a classical point particle  in  a  radial,  centrally  symmetric 
force field in 2 space dimensions, where ${\bf q,p}$ denote the 
two-component position and momentum vectors. The Hamiltonian is: 
\beq{p4}
H(t) = {\bf p}^{2}/2 + [E-V(t)]{\bf q}^{2} - \varepsilon F({\bf q}^{2}),
\eeq
with $F(\zeta)= \int_{}^{} f(\zeta)d\zeta$.
For every $E \in R$ the eq.(\ref{p3}), of course, has a  family 
of solutions labelled by 2 complex (or 4 real) parameters,  but very 
seldom it has solutions vanishing for both 
$t \rightarrow + \infty$ and $t \rightarrow - \infty$. More seldom 
even they will vanish quickly enough to assure: 
\beq{p5}
\int_{-\infty}^{+\infty} |\psi (x)|^{2}dx =
\int_{-\infty}^{+\infty} {\bf q}(t)^{2}dt < + \infty
\eeq
Whenever (\ref{p3}) admits non-trivial solutions vanishing at 
$t \rightarrow \pm \infty$, $E$
is an eigenvalue (proper frequency) of (\ref{p2}). We adopted
again the traditional concept of the spectrum in the 
non-linear case, \cite{Bra,Rab,Eas,Tur,Sch,Hef}, 
to be further discussed in our section 5.

\section{Classical orbits and bound states} 
\setcounter{equation}{0}

    One of advantages of the `classical picture' is that it  
permits  to exploit experiences  of  classical  mechanics  to  
describe  the  `bound 
states' (\ref{p3}) and one of the most obvious ideas is 
to  use  the  classical  motion integrals. As the  force  field  in  
(\ref{p3}-4)  is  radial  and  centrally 
symmetric,  the  angular  momentum  of  each  trajectory  is constant. 
Introducing the polar variables $q_{1}$ = $r \cos\alpha$, 
$q_{2}$ = $r \sin\alpha$, one has: 
\beq{c1}
M = {q_{1}}{p_{2}}-{q_{2}}{p_{1}} = r^{2}\dot{\alpha} = \text{const}.
\eeq     
The canonical eqs. (\ref{p3}) now imply: 
\beq{c2}
\frac{d^{2}r}{dt^{2}} = 2[V(t)-E]r + 2\varepsilon f(r^{2})r 
+ M^{2}/r^{3}
\eeq
interpretable as Newton eq. of motion in 1  space  dimension. 
If $M \not = 0$, the last term prevents $r$ (and ${\bf q}$) from 
tending to zero: 

\begin{prop}\label{prop1}
If $V(t)$ is limited from below 
and $M \not = 0$ then the solution $r(t)$ of (\ref{c2}) 
cannot tend to zero neither for a finite $t$ nor for 
$t \rightarrow \pm \infty$. 
\end{prop}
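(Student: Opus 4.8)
The plan is to use the one-dimensional "radial energy" associated with equation (\ref{c2}) and show that the centrifugal term $M^2/r^3$ acts as an impassable barrier near $r=0$. Writing the effective potential for the radial motion as
\[
U(r,t) = [E-V(t)]r^2 - \varepsilon F(r^2) + \frac{M^2}{2r^2},
\]
one has $\ddot r = -\partial U/\partial r$, and the quantity
\[
\mathcal{E}(t) = \tfrac12\dot r(t)^2 + U(r(t),t)
\]
satisfies $\dot{\mathcal{E}} = \partial U/\partial t = -\dot V(t)\,r(t)^2$. First I would record that along any solution with $M\neq 0$, the radial variable satisfies $r(t)>0$ for all $t$ in its maximal interval of existence, simply because $M = r^2\dot\alpha$ forces $r\neq 0$ wherever the Cartesian solution is defined; the content of the Proposition is the quantitative statement that $r$ stays bounded away from $0$.

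The key estimate is a lower bound on $\mathcal{E}(t)$ on any finite time interval, combined with the blow-up of $U$ as $r\to 0$. Fix a finite interval $[t_0,t_1]$ (or a one-sided ray if we are chasing the limit $t\to+\infty$, treated below). Using that $V$ is bounded below, say $V\ge -c$, and assuming the mild regularity on $f$ already implicit in the paper (so that $F(\zeta)=o(\zeta^{0})$-type control holds near $\zeta=0$, i.e. $\varepsilon F(r^2)$ is dominated by the centrifugal term for small $r$), one gets $U(r,t)\ge \frac{M^2}{2r^2} - C_1 r^2 - C_2$ for constants depending only on the bounds of $V$ and $\varepsilon,f$ on the relevant compact set of $r$-values. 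Hence $U(r,t)\to+\infty$ uniformly in $t$ as $r\to 0$. On the other hand, $\mathcal{E}$ cannot become arbitrarily large in the negative direction on a finite interval: from $\dot{\mathcal{E}} = -\dot V\,r^2$ and Gronwall-type bookkeeping (or, if one does not want to assume $\dot V$ controlled, simply from the fact that on a trajectory approaching $r=0$ the particle must be moving, so $\tfrac12\dot r^2\ge 0$ and $U$ itself would already have to be large), one concludes $\mathcal{E}(t)$ stays bounded on $[t_0,t_1]$. Combining, $\frac{M^2}{2r(t)^2}\le \mathcal{E}(t)+C_1 r(t)^2 + C_2 \le \text{const}$, which gives a positive lower bound $r(t)\ge r_{\min}>0$ on $[t_0,t_1]$.

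For the asymptotic claim ($t\to\pm\infty$), the same inequality is applied on successive unit intervals; the point is that a solution tending to $0$ would force $\mathcal{E}(t)\to+\infty$, but $\mathcal{E}$ can grow at most like $\int |\dot V|\,r^2\,dt$, and near a putative zero of $r$ the term $r^2$ is small, so $\mathcal{E}$ cannot keep pace — contradiction. The main obstacle I anticipate is pinning down exactly what regularity/growth hypotheses on $f$ and on $\dot V$ are needed so that the nonlinear term $\varepsilon F(r^2)$ and the time-dependence are genuinely subordinate to the centrifugal barrier; once those are granted (and they are the natural hypotheses for the problem to make sense), the argument is the standard "angular momentum barrier" comparison from classical mechanics, and the Proposition follows.
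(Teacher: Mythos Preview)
Your energy-barrier approach is different from the paper's argument, and as you yourself flag, it does not quite close under the stated hypothesis. The paper's proof is a one-line \emph{force} comparison: since $V(t)\ge -c$, the term $2[V(t)-E]r$ on the right of (\ref{c2}) is either nonnegative (repulsive) or, when $V(t)-E<0$, bounded in magnitude by $2(|E|+c)r$; likewise $2\varepsilon f(r^{2})r = o(1)\cdot r$ near $r=0$. Thus the non-centrifugal forces are ``either repulsive or limited attractive'' and are dominated by $M^{2}/r^{3}$ as $r\to 0$, so the net radial acceleration is strictly outward once $r$ is small. No differentiability of $V$ is used; only the lower bound enters, and it enters precisely to cap the attractive part of the force.

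Your route via the radial energy $\mathcal{E}(t)=\tfrac12\dot r^{2}+U(r,t)$ is the standard rigorous packaging, but it costs you an assumption the Proposition does not make: to bound $\mathcal{E}$ you invoke $\dot{\mathcal{E}}=-\dot V(t)\,r^{2}$ and ``Gronwall-type bookkeeping,'' which needs $V$ differentiable with some control on $\dot V$. Your attempted workaround (``$\tfrac12\dot r^{2}\ge 0$, so $U$ itself would already have to be large'') does not yield a contradiction: it shows $\mathcal{E}\ge U\to+\infty$, which is \emph{consistent} with $\mathcal{E}$ blowing up rather than contradicting a bound on $\mathcal{E}$ that you have not established. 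So under the bare hypothesis ``$V$ bounded below'' your argument has a genuine gap. If you want to keep the energy viewpoint without assuming regularity of $V$, a cleaner move is to work directly with the differential inequality $\ddot r\ge -Cr+M^{2}/r^{3}$ (equivalently $\ddot\rho\ge -C'\rho + 2M^{2}/\rho$ for $\rho=r^{2}$), which encodes exactly the paper's force comparison and needs only the lower bound on $V$; from there convexity of $\rho$ for small $\rho$ rules out $\rho\to 0$ at finite or infinite time.
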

\begin{proof}
Indeed, for a non-trivial trajectory 
($r(t) \not \equiv 0$) two first terms on the right side of 
(\ref{c2}) create either repulsive or limited attractive forces and 
cannot counterbalance  the third term $M^{2}/r^{3}$ which prevents the 
material point from approaching  too close to zero. 
\end{proof}

    As a consequence, for $M \neq 0$ the integral (\ref{p5}) diverges for any 
$E \in R$. Thus, the non-trivial localized solutions of (\ref{p3}), 
(with $q \not \equiv 0$) if they exist, must fulfill 
$M$=$0$ $\Rightarrow \dot{\alpha}$=$0$  $\Rightarrow$  $\alpha$=$const$.
Without loosing generality, all bound states of (\ref{p2}) can be 
therefore obtained  for $\alpha \equiv 0$, i.e., for $\psi$ real. 
In terms of the trajectory  interpretation  (\ref{p3}) 
it means that the bound states can be determined  just  by  
solving  the 1-dimensional (instead of the 2-dimensional) motion 
problem with:
\beq{c3}
H(q,p,t) = p^{2}/2 + [E-V(t)] q^{2}  - \varepsilon F(q^{2}),
~~~q, p \in R
\eeq     
and
\beq{c4}
\frac{dq}{dt} = p,~~~
\frac{dp}{dt} = 2[V(t)-E]q + 2\varepsilon f(q^{2})q.
\eeq

\section{ Spectra as bifurcations.} 
\setcounter{equation}{0}

While values of $E$ which permit trajectories (\ref{c4})
vanishing on both ends 
$t \rightarrow \pm \infty$ are exceptions, yet for any $E$ 
eq. (\ref{c4}) typically admits a subclass of solutions 
vanishing for $t \rightarrow - \infty$ (the `left vanishing cues'), 
as well as another subclass vanishing for $t \rightarrow + \infty$ 
(the `right vanishing cues'). Indeed:

\begin{prop}\label{prop2}
Suppose $f(\zeta)$ is continuous  
in  $[0,+\infty)$ with  $f(0)=0$, while $V(t)$ is  
defined and continuous outside of a finite interval 
$[a,b]$ with two (proper or improper) limits:
\beq{s1}
V_{1} =  \lim_{t \rightarrow - \infty} V(t) > - \infty,~~~	 
V_{2} =  \lim_{t \rightarrow + \infty} V(t) > - \infty  
\eeq     
Then for any $E<V_{2}$, $t_{o}>b$, there is $K_{2} > 0$ such
that for any $q_{o}\in R$, $|q_{o}| \leq K_{2}$ the  Hamiltonian  
(\ref{c3})  admits  at 
least one  trajectory  $q(t)$  with $q(t_{o})$ = $q_{o}$  and  $q(t) 
\rightarrow 0$  for  $t \rightarrow +\infty$. 
Similarly, for any $E<V_{1}$, $t_{o}<a$, there is $K_{1} > 0$ such
that for any $q_{o} \in R$, $|q_{o}| < K_{1}$, there  is  at  
least  one  integral 
trajectory $q(t)$ with $q(t) \rightarrow 0$ for $t \rightarrow - \infty$.
\end{prop}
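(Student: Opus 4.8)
The plan is to recast the statement as a one-dimensional shooting problem in the initial slope $\dot q(t_{o})$, to isolate the elementary fact that a trajectory trapped near $q=0$ at late times must decay, and to glue the two together by a connectedness argument. First I would fix the reductions. Since $f$ depends on $q$ only through $q^{2}$, equation (\ref{c4}) is invariant under $q\mapsto -q$, so it suffices to treat $q_{o}\in[0,K_{2}]$, and $q_{o}=0$ gives the trivial solution. Because $E<V_{2}=\lim_{t\to+\infty}V(t)$, pick $c>0$ and $T\ge t_{o}$ with $V(t)-E\ge c$ for $t\ge T$, and then, using $f(0)=0$ and the continuity of $f$, a $\delta>0$ so small that $2|\varepsilon|\,|f(q^{2})|\le c$ for $|q|\le\delta$. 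Writing (\ref{c4}) as $\ddot q=a(t,q)\,q$ with $a(t,q)=2(V(t)-E)+2\varepsilon f(q^{2})$, one has $a(t,q)\ge c>0$ whenever $t\ge T$ and $|q|\le\delta$, while $|a|$ is bounded by some $\Lambda$ on $[t_{o},\infty)\times[-\delta,\delta]$.

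Next I would prove the \emph{trapping lemma}: any solution of $\ddot q=a(t,q)q$ with $|q(t)|\le\delta$ for all $t\ge T$ satisfies $q(t)\to 0$ as $t\to+\infty$. Put $\phi=q^{2}$; then $\ddot\phi=2\dot q^{2}+2a\,q^{2}\ge 2c\,\phi\ge 0$, so $\phi$ is convex, nonnegative and bounded on $[T,\infty)$. A bounded convex function on a half-line is non-increasing, so $\phi\to L\ge 0$ and $\dot\phi\to 0$; but $\ddot\phi\ge 2cL$, so $L>0$ would force $\dot\phi\to+\infty$, a contradiction. Staying in the strip $|q|\le\delta$ also precludes finite-time blow-up, so such a solution is global.

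Now the shooting. For $q_{o}\in(0,K_{2}]$ and $v\in R$ let $q_{v}$ solve (\ref{c4}) with $q_{v}(t_{o})=q_{o}$, $\dot q_{v}(t_{o})=v$, and set $\mathcal U=\{v:q_{v}(t)>\delta \text{ for some }t>t_{o}\}$, $\mathcal D=\{v:q_{v}(t)<-\delta \text{ for some }t>t_{o}\}$, $\mathcal S=R\setminus(\mathcal U\cup\mathcal D)$. Both $\mathcal U$ and $\mathcal D$ are open, by continuous dependence on initial data and the strictness of the inequalities, and nonempty: the a priori bound $|\ddot q_{v}|\le\Lambda\delta$, valid while $|q_{v}|\le\delta$, forces $q_{v}$ above $\delta$ for $v$ large positive and below $-\delta$ for $v$ large negative. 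Granting $\mathcal S\ne\emptyset$, any $v\in\mathcal S$ gives a global solution with $|q_{v}|\le\delta$ throughout, hence $q_{v}(t)\to 0$ by the trapping lemma; since this works for every $q_{o}\in(0,\delta]$ one may take $K_{2}=\delta$, which settles the $+\infty$ part. The $-\infty$ part is the time-reversed mirror: under $s=-t$ the equation keeps the same form with $V(-s)\to V_{1}$ as $s\to+\infty$, and $E<V_{1}$, $t_{o}<a$ play the roles of $E<V_{2}$, $t_{o}>b$.

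The one point that needs real work, and which I expect to be the main obstacle, is to show $\mathcal S\ne\emptyset$. On the tail $[T,\infty)$ this is clean, because there $a>0$ makes $q=0$ a genuine saddle: an orbit meeting $|q|=\delta$ leaves the strip transversally, the sets ``exits upward'' and ``exits downward'' in the phase plane at time $T$ are open, disjoint and nonempty, and connectedness yields a nonempty closed stable set $W^{s}_{T}$ (the data at time $T$ whose forward orbit stays in the strip) which, after shrinking $\delta$, sits in an arbitrarily small neighbourhood of the origin and has nonempty vertical slice over every $q_{1}\in[-\delta,\delta]$. The delicate part is the passage from $t_{o}$ to $T$: on the compact interval $[t_{o},T]$ the equation is genuinely non-autonomous with no sign control on $a$, so exits through the upper wall need not be transversal and $\mathcal U,\mathcal D$ need not be disjoint there. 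I would therefore follow the image of the line $\{q_{o}\}\times R$ under the time-$t_{o}$-to-$T$ flow (well defined near the origin) and show that this curve meets $W^{s}_{T}$; this is a transversality/degree argument in which the hyperbolicity of the tail, i.e. $E<V_{2}$, is precisely what keeps $W^{s}_{T}$ one-dimensional, and in which one must also exclude the thin set of potentials for which the stable direction, transported back to $t_{o}$, becomes vertical, or equivalently, for which the linear decaying solution vanishes at $t_{o}$ --- a degeneracy that is absent, for example, whenever $V>E$ on all of $[t_{o},\infty)$.
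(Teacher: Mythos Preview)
Your approach is precisely the shooting argument the paper intends: the authors' own proof is the single line ``\textbf{Proof} is an exercise in shooting \cite{book,Mre},'' so you have supplied far more than they do. The trapping lemma via convexity of $\phi=q^{2}$ on the tail $[T,\infty)$ is correct and efficient, and your verification that $\mathcal U,\mathcal D$ are open and nonempty is fine.

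Your hesitation about $\mathcal S\ne\emptyset$ is not only justified but sharper than the paper. The degeneracy you isolate --- the linear decaying solution vanishing at $t_{o}$ --- is a genuine obstruction, not a removable technicality. Take the purely linear case $f\equiv 0$ (which satisfies all hypotheses): the decaying solutions of $\ddot q=2(V-E)q$ on $[t_{o},\infty)$ form a one-dimensional family $\lambda u(t)$, and if $u(t_{o})=0$ then \emph{every} decaying trajectory has $q(t_{o})=0$, so no $K_{2}>0$ works. Such a zero certainly can occur once $V(t)-E$ is allowed to be negative on part of $(b,\infty)$, which the hypotheses permit. Hence the proposition, read with its universal quantifier over $t_{o}>b$, is literally too strong.

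What is actually true, and all the paper ever uses, is the case $V(t)>E$ on the whole of $[t_{o},\infty)$; then $a(t,q)\ge c>0$ from $t_{o}$ onward (after shrinking $\delta$), exits through $|q|=\delta$ are transversal, $\mathcal U$ and $\mathcal D$ are disjoint, and connectedness of $\mathbb R$ gives $\mathcal S\ne\emptyset$ directly --- no intermediate interval $[t_{o},T]$, no transversality argument needed. Your closing remark that the degeneracy ``is absent, for example, whenever $V>E$ on all of $[t_{o},\infty)$'' is therefore exactly the right way to finish, and it covers every potential-well example in the paper (where $V\equiv 0$ outside $[a,b]$ and $E<0$).
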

\noindent({\bf Proof}  is  an  exercise  in 
shooting \cite{book,Mre}. {\it Note:} the conclusion of the theorem 
holds also  for 
some non-linear eqs. with $f(\zeta)$ singular at $\zeta$=0 \cite{IBB}). 

To illustrate the structure of the cues 
we shall consider the case of a finite potential
well with $V(t) \leq 0$ and $V(t) \equiv 0$ 
outside of a 
finite interval  $[a,b]$.  The  trajectory  q(t)  outside  of  [a,b]  then 
corresponds to the motion of a classical point in  a  static  potential, 
with the Hamiltonian: 
\beq{s2}
H(t) = H_{o} = p^{2}/2 + E q^{2}  - \varepsilon F(q^{2}),~~~t\not\in[a,b]   
\eeq     
The solutions asymptotically vanishing at $t \rightarrow 
\pm \infty$ are the  orbits  for 
which the Hamiltonian (\ref{s2}) vanishes, i.e: 
\beq{s3}
p = \pm q \sqrt2 \sqrt{-E + \varepsilon F(q^{2})/q^{2}} = \pm \eta(q,E) 
\eeq     
the signs + (or - ) label the solutions vanishing  
at $t \rightarrow - \infty$ 
($t \rightarrow + \infty$), respectively. If one forgets about the 
exact  t-dependence,  the 
condition (\ref{s3}) determines just two  evolution-invariant  
curves (a part of the `phase portrait' of (\ref{p2}), 
cf.\cite{Holms}): 
\beq{s4}
I_{\pm}(E) = \{(q,p): p=\pm \eta(q,E)\}.                 
\eeq     
Under the exclusive  influence  of  the  free  Hamiltonian  
(\ref{s2}) (i.e., for $V(t)\equiv 0$),  the  canonical  evolution  
produces  a `curvilinear squeezing' which distinguishes 
the $I_{\pm}(E)$ lines:  $I_{+}(E)$  expands  (the  points on 
$I_{+}(E)$ escape from the phase space origin as $t$ increases 
from $t $=$ -\infty$), whereas $I_{-}(E)$ 
shrinks (the points of $I_{-}(E)$ tend to the origin as 
$t \rightarrow + \infty$). 
If  one  solves (\ref{s3}) including the exact time dependence: 
\beq{s5}
\dot{q} = \pm \eta(q,E).                 
\eeq   
then the points on $I_{+}(E)$ originate the left vanishing  cues  
(i.e.  the 
solutions of (\ref{p2}-3) vanishing at $t\rightarrow -\infty$), 
whereas the points of $I_{-}(E)$,  the 
right vanishing cues (tending to zero as $t\rightarrow +\infty$).
If $I_{+}(E)$ and  $I_{-}(E)$ don't connect outside of the
origin,   
the solutions in the form of bound states can arise only due to the
potential $V(t) \not\equiv 0$.
The number $E$ is an eigenvalue of (\ref{p2}) if the evolution 
enforced by $V(t)$ in the  interval 
$[a,b]$ transforms one of the left-vanishing cues into one  of  the  
right-vanishing cues (thus drawing a wave function which vanishes  
on the  both ends $t\rightarrow\pm\infty$; 
compare \cite{Mre}). 
The effect has rather little to do  with  the  linearity.  
The linearity  is a marginal property of the orthodox 
Schr\"{o}dinger's dynamics - whereas the existence of the `bound states' is the 
topological phenomenon, caused by an abrupt change (bifurcation)
which produces the homoclinic orbits of (\ref{c4}) (compare Guckenheimer and Holmes \cite{Holms}). The mechanism of this effect can be 
easily monitored.

Suppose, we keep $E$ fixed  but change $V(t)$ putting 
$V(t)=\lambda\phi(t)$, with $\phi$ fixed and $\lambda>0$ 
variable. Observe then a congruence of orbits sticking at 
$t=a$ from a given point of $I_{+}(E)$. If  
$\lambda=0$, then the orbit sticks to $I_{+}(E)$ 
forever. Assume now, we switch on slowly the potential term  
$V(t)=\lambda\phi(t)$.  For 
small $\lambda>0$ the motion is slightly modified 
(the trajectory  is pushed 
out of $I_{+}(E)$) though it returns to $I_{+}(E)$ asymptotically  
as  $V(t)$  disappears. 
When $\lambda$ increases, the type of the motion changes. 
In the  time  interval where $V(t)-E=\lambda\phi(t)-E<0$  the  
Hamiltonian  (\ref{c3})  becomes an attractive 
anharmonic oscillator and causes a circulation around the origin 
instead of squeezing. Everytime the circulating point  $(q(b),p(b))$  
crosses  the line $I_{-}(E)$, the asymptotic behaviour of the 
trajectory suddenly changes. At the exact bifurcation value of 
$\lambda$ the point $(q(b),p(b))$ ends up on $I_{-}(E)$; 
henceforth,  for $t>b$, the potential-free motion (\ref{s2})
returns $(q(t),p(t))$ to zero drawing an exceptional, closed 
orbit (a bound state of $\lambda\phi(t)$ for the eigenvalue  $E$).  
An  analogous 
picture arises for $V(t)$ fixed but $E$ changing. To illustrate  
the phenomenon, we have applied 
the computer simulation to obtain a family 
of canonical trajectories of (\ref{c3}) for a fixed  $E<0$  and  
the potential well $\phi(t)$ given by the forthcoming formula (\ref{10.1}). 
The bifurcations lead to  the bound states illustrated on Fig.~\ref{fig1}. 
\begin{figure}[htb]
\centering
\includegraphics[width=0.6\linewidth]{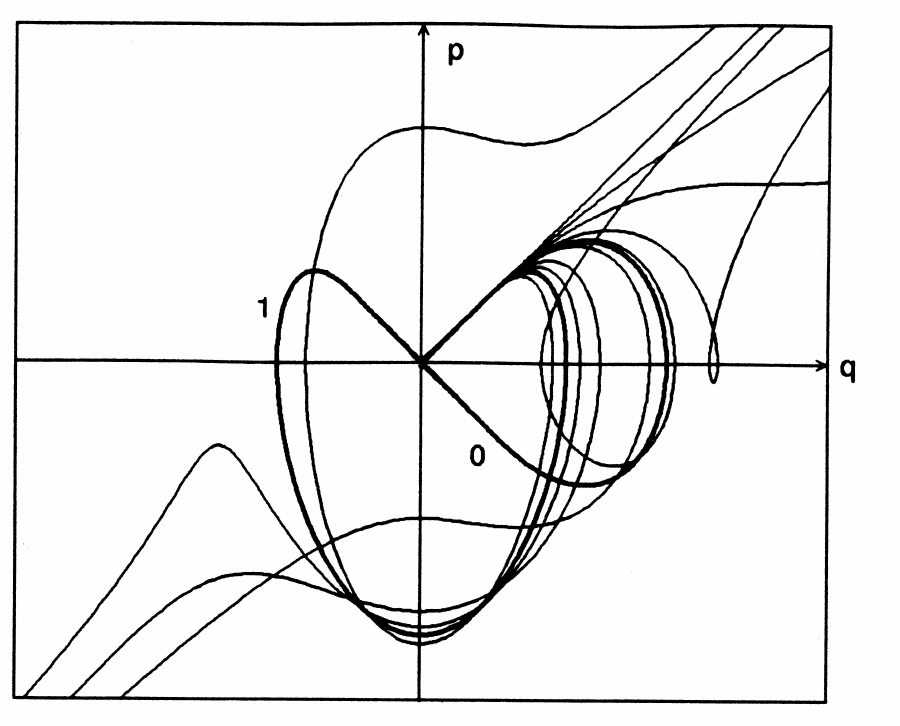}
\caption{A congruence of classical trajectories 
of (3-5) in process of bifurcation.
$V(\cdot)$ given by~\eqref{10.1}. The bold lines 0, 1 mark two 
lowest bound orbits created at the bifurcation values of
$\lambda$~\cite{Mre}. An analogues phenomenon can be seen 
for $\lambda$ fixed and $E$ varying.\label{fig1}}
\end{figure}

Note, that we have arrived at a  certain  general  scheme  
for generating the bound states, which no longer requires linear 
spaces and  linear  operators. A similar phenomenon would occur 
on any 2-dim. symplectic manifold  ${\cal P}$ (compare Klauder~\cite{Klau}), with two flows
generated by two `antagonistic' vector fields A and B sharing a 
common fixpoint 0. 

The flow A should be a {\it squeezing, } with a {\it saddle point}
at 0 and with the phase portrait dominated by two intersecting 
invariant lines:  $I_{+}$ expanding, $I_{-}$ shrinking 
(see Fig.~\ref{fig2}). The field B, in turn, should be a circulation, 
with orbits in form of closed loops surrounding the fixpoint 0 
(compare Guckenhaimer and Holmes \cite[p.52, Fig.~1.8.6]{Holms}).

\begin{figure}[H]
\centering
\includegraphics[width=0.85\linewidth]{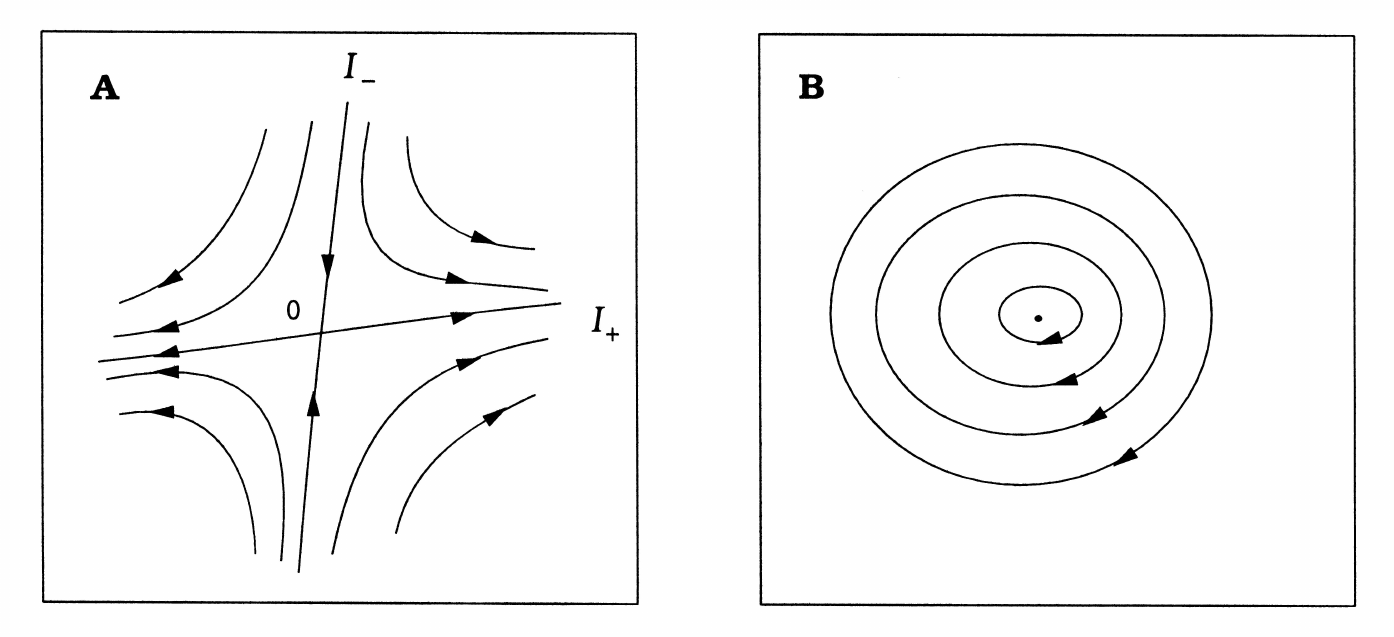}
\caption{The imitation of the spectral phenomenon by two vector
fields, ${\bf A}$ and ${\bf B}$, with coinciding fixpoints
of different types on an arbitrary 2-dim surface. 
The bifurcations of the orbit driven by 
${\bf A}+\lambda \phi(t){\bf B}$ occur whenever the circulation 
generated by ${\bf B}$ changes the assymptotic form of the trajectory
allowed by ${\bf A}$ (compare Guckenheimer and Holms~\cite[p.~52]{Holms}.}
\label{fig2}
\end{figure}

If now the phase point ${\bf q}(t)\in$ ${\cal P}$ moves 
under the influence of a combined vector field ${\bf A} + \lambda 
\phi(t){\bf B}$ (where $\lambda \ge 0$ is a variable amplitude and 
$\phi(t) \ge 0$), then, as $\lambda $ increases, the number of 
intersections of the phase trajectory with the shrinking line 
$I_{-}$ grows too.
At each new intersection, the trajectory bifurcates, forming an 
exceptional, closed orbit, interpretable as a bound state. 

     Our example is oversimplified - but it shows that the mechanism
of creation  
of the  bound  states (at least in the 1-dimensional case) is  not  
metrical  but {\it par  excellence}  topological. 
If the theory is non-linear, the orthogonality of the bound 
states desappears - but the bifurcation mechanism still works,
distinguishing the spectral parameters for the non-linear  system. 
Notice, that 
the idea of spectrum as a sequence of bifurcations has  already  
emerged in some mathematical areas,  e.g. in studies of chaotic   
systems \cite{Bambi1,Bambi2}. It is still an open problem whether 
the similar idea
could work in higher dimensions or for the abstract non-linear 
operators. We shall see, however, that as far as 1-space dimension 
is considered, it leads to some efficient numerical techniques.

\section{ Numerical algorithm and Zakharov well. }
\setcounter{equation}{0}

      To determine numerically the  bifurcation spectra, we 
propose a simple variant of the shooting method involving only the 
integration in a finite interval. 
To illustrate it let's consider again 
eq. (\ref{p2}) with $V(t)$ forming a limited potential well  
($V(t)\equiv 0$ for $t\not\in [a,b]$). 
For any $E\in R$ we then choose an initial point ${\bf q}(a)$
in $I_{+}(E)$ and integrate (\ref{p3}) in $[a,b]$ finding the `final 
point' ${\bf q}(b)=(q(b),p(b))$.

Whenever for an `initial point' 
${\bf q}(a)=(q_a ,p_a)\in I_{+}(E)$ the `final point' ${\bf q}(b)$ 
happens to be on $I_{-}(E)$, the shooting 
exercise was a success: the number E is then an eigenvalue of 
(\ref{p2})  and 
the orbit of (\ref{p3}) [defined by the initial condition  
${\bf q}(a)=(q_a,\eta(q_a , E))$] 
represents a bound state of (\ref{p2}). 
The set of data $(q_a,E)$ for which this 
happens, typically, forms a sequence of lines on the $(q_{a},E)$  
diagramme. The different branches $E_n=E_n(q_{a})$ correspond to
the different numbers of times the phase trajectory crosses
the shrinking line $I_{-}(E)$ for $t\in(a,b)$ (see Fig.~\ref{fig1}). 
The non-trivial $q_{a}$-dependence
of the branches $E_{n}$ reflects the fact that for the 
non-linear eq. (\ref{p2}) the eigenvalues, in general, 
depend on the norm.

    The existence of the localized states (solutions vanishing at 
$t\rightarrow\pm\infty$) does not yet assure that they must be 
square integrable.  In  fact,  for the eq. (\ref{p2}) the cues are 
determined by the  structural function $F(\zeta)$; their square 
integrability depends on  the  convergence of the integrals: 
\beq{n1}
\int_{} \frac{\dot{q}qdt}
{\sqrt{-E +\varepsilon F(q^{2})/q^{2}}} =
\int_{}^{} \frac{d\zeta}{\sqrt{-E +\varepsilon F(\zeta)/\zeta}} 
\eeq     
around $\zeta=q^{2}=0$. 
Presumably, there might be non-linear theories with localized  
solutions vanishing so slowly at $t \rightarrow\pm\infty$ that 
the integrals (\ref{n1}) diverge. The physical sense of such 
localizations is an open problem. Below, we shall consider 
non-linearities for which this does not occur. If this is the case, 
each (non-trivial) bound state possesses a finite norm (\ref{p5}) 
which apports an essential physical information (in contrast to 
the linear theory, where the norm is arbitrary).
The information about the norm, however, is difficult  for 
the  computer  manipulations: (\ref{p5}) is finite only for 
exceptional trajectories (bound states); for all other solutions is
infinite. To facilitate numerical operations we thus introduced 
the {\it pseudo-norm}, well defined, continuous, for all  solutions,  
and coinciding  with (\ref{p5}) whenever the solution is localized. 

\begin{definicja}\label{def1}
For any integral trajectory $q(t)$, 
the {\it pseudonorm} $N(q)$ is: 
\begin{multline}
\label{n2}
N(q) = \int_{-\infty}^{a} q_{+}(t,E)^{2} dt + 
       \int_{a}^{b} q(t)^{2} dt + 
       \int_{b}^{+\infty} q_{-}(t,E)^{2} dt\\ =
       N_{+}(q) +N_{o}(q) +N_{-}(q)
\end{multline}  
where $q_{\pm}(t,E)$ are two `vanishing cues', defined 
in $(-\infty,a]$ and $[b,+\infty)$, 
joining  $q(t)$  at  $t=a$  and  $t=b$   respectively   
(i.e.,   $q_{+}(a,E)=q(a)$; 
$q_{-}(b,E)=q(b)$, {\it without demanding} 
the continuity of the derivatives).  
\end{definicja}
From the definition, (\ref{n2}) is always finite; moreover, 
if $q(t)$ is a bound state, then  $q_{\pm}(t,E)=q(t)$  and  
the  pseudonorm  (\ref{n2})  reduces  to the true norm (\ref{p5}).  
Each  {\it isonorm  line}
$N$=const, typically, crosses the eigenvalue lines $E=E_n(q_{a})$  
($n$=0,1,...); 
the intersections determine the eigenvalues for the  bound  
states of the same norm $N$. In particular, the line $N$=1   
defines the `traditional' sequence of  eigenvalues  
$E_n$  ($n$=0,1,...) for all  bound  states of norm 1. 

To check the method, we have found the `localizations' for  the 
eq. of Zakharov type \cite{Zakh}: 
\beq{n3}
-\frac{1}{2} \frac{d^{2}\psi}{dx^{2}} + [V(x)-E]\psi 
+ \varepsilon|\psi|^{2} \psi = 0
\eeq     
in presence of the rectangular potential well symmetric with 
respect to $t=0$: 
\beq{n4}
V(t)  \left\{ \begin{array}{ll}
              = V_{o} & \mbox{for $t\in [-b,b]$} \\
              \equiv 0 & \mbox{for $t\not\in [-b,b]$}
              \end{array}
	\right.,~~~V_{o} < 0.
\eeq     
Here, $a=-b$, $f(\zeta)=\zeta$, $F(\zeta)=\zeta^{2}/2$, 
and the expanding/shrinking curves $I_{\pm}(E)$  are 
given by: 
\beq{n5}
p(q) = \pm\sqrt{\varepsilon  q^{4} - 2Eq^{2}}.
\eeq
To find the norms, it helps that the bound states  are of 
definite parity inside of $[-b,b]$. In fact, since 
$H_o = p^{2}/2 + [E - V_o]q^{2} - \varepsilon q^{4}/2$  
is a conserved quantity in $[-b,b]$, one has  $
q(b)=\pm q(-b)= \pm q_a$  
for  any  bound  state. Moreover, the Zakharov eq. (\ref{n3}) belongs 
to the list of cases  where the integrals (\ref{n1}-2) are explicitly 
known:
\beq{n6}
N_{\pm} =
\frac{2}{\varepsilon\sqrt{2}}
\left(\gamma\sqrt{|E|+\varepsilon q_{a}^{2}/2}
- \sqrt{|E|}\right),
\eeq
where $\gamma=+1$ for the {\it short cues} and $\gamma=-1$ 
for the {\it long cues} (the last case possible only for 
$\varepsilon < 0$).
The numerical calculus intervenes only in the finite  interval 
$[-b,b]$. We have applied the Runge-Kutta method integrating the 
canonical 
eqs. (\ref{p3}) in $[-b,b]$ for the set of the initial points  
${\bf q}_{a}=(q_{a}, \eta(q_{a},E))\in I_{+}(E)$. 
For each such ${\bf q}_{a}$ we have 
determined the  sequence  of  values $E=E_n$ for which the 
`end points' reach $I_{-}(E_n)$. Differently  than  in  the 
linear theory, the whole ladder depends on the initial value $q_{a}$ 
giving 
rise to the sequence of functions $E=E_n(q_{a})$ as shown on our  
Fig.~\ref{fig3}.  Their intersections with the isonorm lines 
determine the eigenvalues for each given norm. 
\begin{figure}[htb]
\includegraphics[width=1.0\linewidth]{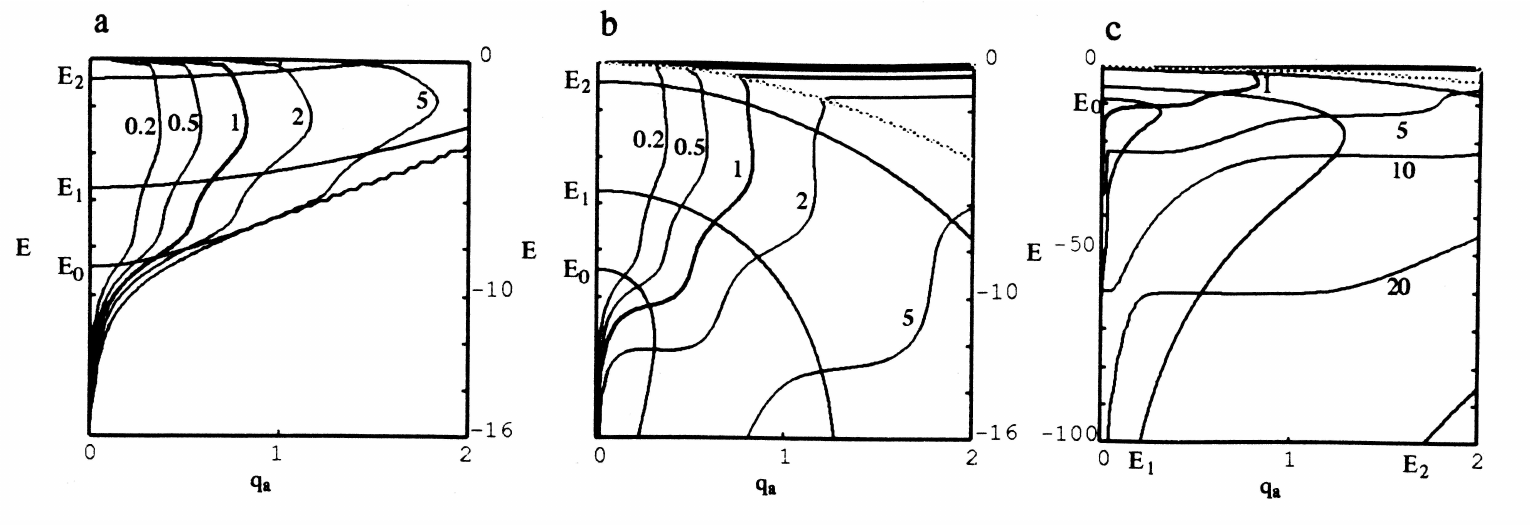}
\caption{
Three maps of the norm dependent $E$-values for the Zakharov packets
in the square potential well with the bottom at $V_{o}$ = -10 and
$b = 1.6$. The fat lines represent the packet with the norm 1. The
eigenvalues for packets of any given norm are determined by the
intersections of the $E_{n}$-lines with the corresponding norm line.
(a) $\varepsilon > 0$ (the self-repulsive packets) the eigenvalues are
above the orthodox ones; (b-c) $\varepsilon < 0$ (self-attractive 
packets) the ground state admits $E_{o}$ below the 
bottom of the well (e.g. $E_{o} \approx -10.4$ for the packet norm 1,
$E_{o} \approx -12.2$ for the packet norm 2). The gray lines fence off
top-right regions of the diagrammes where the pseudonorm 
is indetermined.\label{fig3}}
\end{figure}

As can be observed, the eigenvalues for the states  of  small  norms 
(`little eigenstates') are almost the same as in the linear theory.  
The best conceptual analogy with the orthodox  (linear)  wave  
mechanics  is 
achieved on the isonorm line $N=1$ ( $|q(t)|^{2}$ interpretable simply
as the probability density). For the 
`wave packets' of norms $>1$ the statistical interpretation is no  
longer 
obvious; more appealing would be to interpret them as  localized `drops 
of quantum matter' obeying the non-linear eq. (\ref{p1}-2) 
due to the internal interactions (compare \cite{Die,Gra}). 
Indeed, such an idea is recently adopted in works dedicated to  
boson condensations \cite{Boson1,Boson2,Sack,Dum}.
Observe that for the bound states of high norm the lowest eigenvalue
can be much below the bottom of the well (Fig.~\ref{fig3}c); 
a phenomenon unknown in the linear theory, of tentative interest 
in physics of the condensed mater \cite{Boson1,Boson2,Sack,Dum,Bare}.

    It might be interesting to notice 
that our algorithm works also for the non-linear model  recently 
proposed by Diez et al. \cite{Die,Gra} permitting 
to determine  the  bound  states  in 
case of a double barrier. For $V(t)\equiv 0$ the asymptotic cues 
are exactly  as in the linear theory.

\section{ The Zakharov localizations in a $\delta$-well. } 
\setcounter{equation}{0}

    An extremally simple solution  of  the  spectral  problem  
(\ref{n3})  is obtained for $V(t)$ in form of a $\delta$-well: 
$V(t)=-\Omega\delta(t)$, $\Omega>0$. The  non-linear 
equation (\ref{p2}) for $E<0$ traduces itself into a canonical  
motion  problem 
for a  classical  point  moving  under  the  influence  of  a  constant 
potential  $Eq^{2} - \varepsilon F(q^{2})$ ,  corrected  at  $t=0$  
by  a  sudden 
attractive shock. The time-dependent classical Hamiltonian reads: 
\beq{z1}               
H(t) = p^{2}/2 + [E + \Omega\delta(t)] q^{2}  - \varepsilon F(q^{2})
\eeq
The trajectory ${\bf q}(t)=(q(t),p(t))$  which  vanishes  at  both  
extremes $t\rightarrow\pm\infty$ 
is composed exclusively of two vanishing cues, with the $p$-jump 
at  $t=0$  caused  by  the  $\delta$-pulse  of an attractive force.   
Denote $q_o=q(0_{-})=q(0_{+})$  and  $p_o=p(0_{-})$.  
The  expression  (\ref{s3})   then   requires $p(0_{+})=-p_{o}$. 
On the other hand, the momentum jump is produced by the pulse 
of force: 
\beq{z2}               
\Delta p = p(0_{+})-p(0_{-}) = -2p_{o} =
        \int_{0_{-}}^{0_{+}} {\cal F}(t)dt = 
        -2\Omega \int_{0_{-}}^{0_{+}} q(t)\delta(t)dt = -2\Omega q_{o} 
\eeq
Taking $p_{o}$ from eq. (\ref{s3}) one has: 
\beq{z3}
p_{o} = q_{o}\sqrt2 \sqrt{-E + \varepsilon F(q_{o}^{2})/q_{o}^{2}} = 
\Omega q_{o}
\eeq
and so:
\beq{z4}               
E = - \Omega^{2} /2 + \varepsilon F(q_{o}^{2} )/q_{o}^{2}
\eeq
i.e., the standard eigenvalue $-\Omega^{2}/2$ is corrected 
by the non-linear term. Assuming that $F\geq 0$ in vicinity 
of zero, one sees again that for $\varepsilon<0$ (the self-attractive 
packets) the creation 
of a localized states is possible for a lower $E$, 
whereas for $\varepsilon>0$ 
(self-repulsion) the non-linearity prevents to trap the packet 
(the  bound state occurs on a higher $E$ level). 

In particular, for the Zakharov eq. (\ref{n3}): 
\beq{z5}               
E = - \Omega^{2} /2 + \varepsilon q_{o}^{2}/2
\eeq
the norm integral (\ref{n1}) is elementary: 
\beq{z6} 
N(\psi) =
\int_{-\infty}^{+\infty}q(t)^{2}dt = 
\frac{1}{\sqrt{2}}
\int_{o}^{q_0^2} \frac{d\zeta}{\sqrt{-E +\varepsilon\zeta/2}} =
\frac{4}{\varepsilon\sqrt{2}}
[\sqrt{-E +\varepsilon q_{o}^{2}/2}-\sqrt{-E}]
\eeq
Determining $q_{o}^{2}$ from (\ref{z5}),
\beq{z7}               
q_{o}^{2}  = (2E + \Omega^{2})/\varepsilon
\eeq
and substituting into (\ref{z6}) with $N = 1$ one obtains,
\beq{z8}               
\sqrt{-2E} = \Omega - \varepsilon/2 
\eeq
Since $\sqrt{-2E}$, from definition, is non-negative, the solution  
exists  only if $\varepsilon\leq 2\Omega$ (too strong non-linearity 
prevents the localizations! Compare with an approximate result in the theory of Bose condensation \cite{Sack}).
One henceforth obtains: 
\beq{z10}               
E = -(1/2) (\Omega - \varepsilon/2)^{2}
\eeq
Note also, that if $\Omega < 0$, $\varepsilon \le 4\Omega$ 
(the case of delta barrier and self attractive packet) 
the long  cues of Zakharov packet make possible
the construction of a huge localized state with the eigenvalue given by 
the identical formula (\ref{z10}).

\section{Macrostates}
\setcounter{equation}{0}

     For a class of non-linear eqs.  the `bound states' exist  even in 
absence of potential.
The phenomenon depends obviously on the global structure of 
the squeezing group. It occurs whenever for some $E$ the  1-dim  
manifold $H=0$ is a  closed  loop  (the  expanding  and 
 shrinking  lines $I_{\pm}(E)$
connect). If this is the case, the time dependent point moving along 
the loop $I_{\pm}(E)$ paints a huge localized state $\psi$.  
The  norm  of $\psi$ 
cannot be arbitrarily small (it is exactly  determined  by  the 
value of E and by the nonlinearity); we thus call $\psi$ 
a {\it macro-state}  or 
{\it macro-localization}.  The  links  with  solitons   are immanent. 
The `solitonology' usually describes `traveling  waves' $\Psi(x,t)$,  
with  the stability aspects stressed and spectral aspects forgotten. 
However,  for 
the Galileo invariant theory  both  phenomena  are  the  same:  the 
`traveling waves' are just Galileo transformed macrostates.  This  would 
suggest that the solitons too must share the bifurcation aspects of the  bound 
states! We shall show that this indeed occurs. Let us return to
the non-linear Schr\"{o}dinger's  eqs. (\ref{p2}-3).
\begin{figure}[H]
\includegraphics[width=\linewidth]{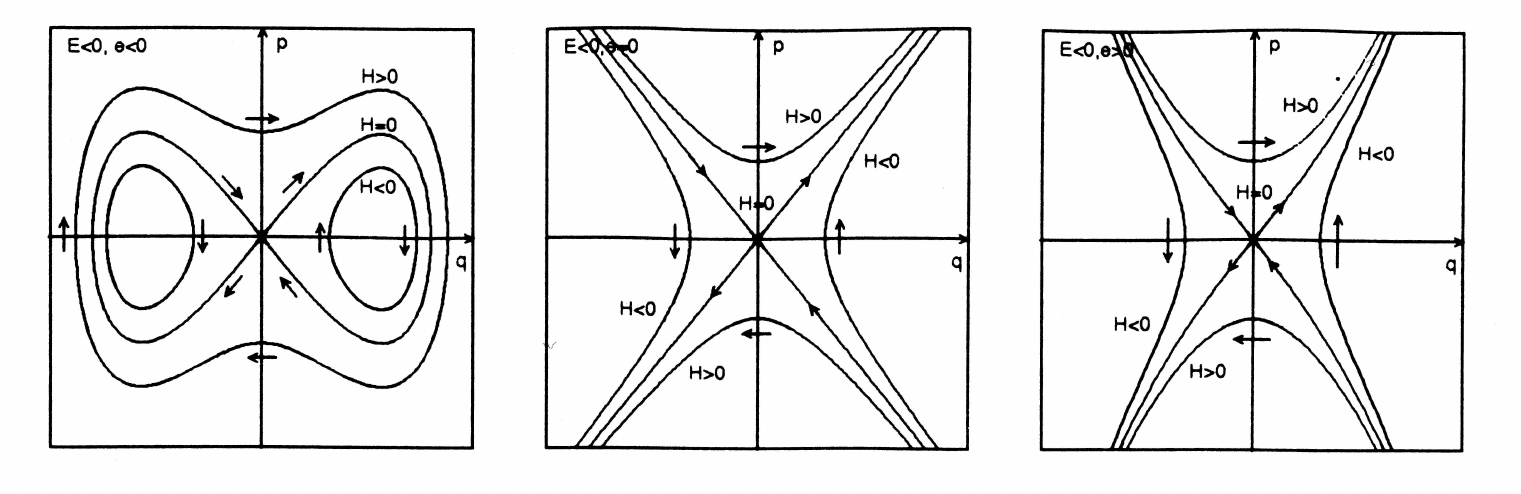}
\caption{
The structure of $H=0$ izolines makes possible the 
`macro-localizations'
for the Zakharov nonlinearity~\eqref{9.1} with $\varepsilon < 0$ and
$E<0$; for the case of Gausson a similar picture 
would be obtained for $\varepsilon < 0$ and $E$ arbitrary.
}\label{fig4}
\end{figure}

The `macro-localizations' of (\ref{p2}) exist  if $F(\zeta)
=\int f(\zeta)d\zeta$ 
in $[0,\infty]$ has a proper but local maximum at $\zeta=0$. 
If this  is  the  case, 
each canonical orbit of (\ref{p3}) emerging from 0 reaches a
maximal $q$-value at the intersection of the $I_{\pm}(E)$ line with 
the $q$-axis,  and 
then returns to 0, drawing a picture of a  macro-state.  
Whether  this 
requires a finite or infinite time depends on the nonlinearity  
function $f()$. If the time is finite, the system has tendencies  
to create compact support solutions (droplets), a phenomenon which 
still awaits investigation (but see an interesting article 
of Aronson, Crandall and Peletier \cite{Aro}).
Below, we study two cases of (\ref{p2}-3) 
for which the time is infinite 
(localizations vanish asymptotically as $t\rightarrow\pm\infty$): 
\begin{align}
\label{9.1}
\text{Classical~soliton~(Zakharov):}~~~f(\zeta)&=\zeta\\
\label{9.2}
\text{Gausson~(Bialynicki-Birula~\&~Mycielski):}~~~f(\zeta)&=\ln \zeta
\end{align}
The topology of the `squeezing lines' in both cases 
corresponds to our Fig.~\ref{fig4}, though the detailed 
behaviour of the cues is different.
As one can see, the Zakharov eq. has the macro-states for 
$\varepsilon<0$ and $E<0$; the logarithmic eq. for $\varepsilon<0$ and 
any $E$. Both integrate easily, leading to the well known 
formulae: 
\begin{align}
\label{9.3}
{\displaystyle \psi_{S}(x)} &=
\sqrt{\frac{2E}{\varepsilon}}\frac{1}{\cosh(x\sqrt{-2E})}
~~~{\rm (Classical~soliton)},\\
\label{9.4}
{\displaystyle \psi_{G}(x)} &=
\exp\left(\frac{E +\varepsilon}{2\varepsilon} 
+ \varepsilon x^{2}\right)
~~~{\rm (Gausson)}.
\end{align}
The links between the solitons and completely integrable mechanical 
systems have been  carefully  explored \cite{Wo1,Wo2},  though  
the  attention  was 
usually focused on the soliton  stability  (with  few 
exceptions; see, e.g. \cite{Bare}). For gaussons (\ref{9.4}) the
stability problem is still open. 
The questions as to, how  the soliton (gausson)
interacts with an external potential was almost neglected: and  
this is precisely
where  the  bifurcations occur.  In  fact,  consider  the  canonical 
trajectory of eqs. (\ref{p3}) which departs from $q=0$ at  $t=-\infty$  
and draws a macrostate. Suppose, however, the process is perturbed
by a little potential pulse $V(t) = \lambda\phi(t)$, where
$\lambda\in R$ and $\phi(t)\not\equiv 0$ is a fixed, bounded,  non-negative 
function vanishing outside of a finite interval $(\alpha,\beta)$.
One might expect that if $\varepsilon$ is small enough, the existence 
of $V(t)$ will cause just a little modification in the form of each macrostate.
In general though, this is not the case. Indeed one can show that even a 
very tiny potential pulse can preclude completely the existence of 
a class of stationary states which exist in vacuum.

 Let $I_{\pm}(E)$ be the macrostate loop for (\ref{s3}-4) and let 
${\bf q}(t)=(q(t),p(t))$ be one of the corresponding localized solutions defined by (\ref{s5}). To fix attention,
choose ${\bf q}(t)$ on the upper branch $I_{+}(E)$, with $p(t) > 0$ 
and $q(t)$ increasing from 0 to $q_{\rm max} = q(t_{o})$, 
(where $q_{\rm max}$ 
is the maximal value of $q$ 
at the turning point between $I_{+}(E)$ and $I_{-}(E)$). 
Suppose, the potential pulse $V(t)=\lambda \phi(t)$ 
occurs in an interval $(\alpha,\beta)$ where ${\bf q}(\alpha), {\bf q}(\beta)\in$
$I_{+}(E)$, $0 < q(\alpha), q(\beta) < q_{\rm max}$, $0 < p(\alpha),p(\beta)$.
 Then:

\begin{prop}\label{prop3}
  If $|\lambda|$ is small enough, $\lambda \neq 0$, then eqs. (\ref{c4}) have no bound state coinciding with ${\bf q}(t)$ for $t \leq \alpha $.
\end{prop}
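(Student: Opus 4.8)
The plan is to exploit the Hamiltonian structure on the "free" intervals $t\le\alpha$ and $t\ge\beta$ together with the single conserved quantity $H_0$ of \eqref{s2}. On $(-\infty,\alpha]$ the potential vanishes, so along any trajectory $H_0=p^2/2+Eq^2-\varepsilon F(q^2)$ is constant; demanding that the trajectory coincide with ${\bf q}(t)$ for $t\le\alpha$ forces $H_0=0$ on this whole stretch, hence the phase point lies on the macrostate loop $I_+(E)$ up to $t=\alpha$, arriving at ${\bf q}(\alpha)$ with $0<q(\alpha)<q_{\rm max}$, $p(\alpha)>0$. The heart of the argument is to track what the pulse $V(t)=\lambda\phi(t)$ does to the value of $H_0$ across $(\alpha,\beta)$. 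By \eqref{p4}, along the perturbed motion $\dot H = -\dot V\,{\bf q}^2 = -\lambda\dot\phi(t)\,q(t)^2$, so integrating by parts,
\beq{Hshift}
H_0(\beta^+)-H_0(\alpha^-) = -\lambda\int_\alpha^\beta \dot\phi(t)\,q(t)^2\,dt
 = \lambda\int_\alpha^\beta \phi(t)\,\frac{d}{dt}\!\left(q(t)^2\right)dt
 = 2\lambda\int_\alpha^\beta \phi(t)\,q(t)p(t)\,dt,
\eeq
using $\phi(\alpha)=\phi(\beta)=0$. The next step is a continuity/perturbation estimate: for $|\lambda|$ small the solution of \eqref{c4} with data ${\bf q}(\alpha)$ stays uniformly close on the compact interval $[\alpha,\beta]$ to the unperturbed ${\bf q}(t)$, on which $\phi\, qp$ is continuous and \emph{strictly positive} on an open subinterval (since $\phi\not\equiv0$, $\phi\ge0$, $q>0$, $p>0$ there). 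Hence the integral on the right of \eqref{Hshift} is bounded below by a positive constant $c>0$ independent of small $\lambda$, and therefore
\beq{signH}
H_0(\beta^+) = 2\lambda\int_\alpha^\beta \phi(t)\,q(t)p(t)\,dt \neq 0
\quad\text{whenever } 0<|\lambda|\le\lambda_0,
\eeq
with the sign of $H_0(\beta^+)$ equal to the sign of $\lambda$.

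Finally I would close the argument by contradiction. If eqs.~\eqref{c4} had a bound state $\tilde{\bf q}(t)$ agreeing with ${\bf q}(t)$ for $t\le\alpha$, then on $[\beta,+\infty)$ the potential again vanishes, $H_0$ is conserved there, and square-integrability forces $\tilde{\bf q}(t)\to 0$, hence $H_0(\tilde{\bf q})=0$ on $[\beta,+\infty)$ — i.e.\ $H_0(\beta^+)=0$, contradicting \eqref{signH}. Therefore no such bound state exists for $0<|\lambda|\le\lambda_0$, which is the claim with the constant $\lambda_0$ furnished by the perturbation estimate.

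The main obstacle is the uniform lower bound on $\int_\alpha^\beta\phi\,qp\,dt$ for the \emph{perturbed} trajectory rather than the reference one: one must control solutions of \eqref{c4} on the fixed compact interval $[\alpha,\beta]$, check that they do not leave the region $q>0$, $p>0$ before time $\beta$ (guaranteed for $|\lambda|$ small by continuous dependence, since ${\bf q}(\alpha),{\bf q}(\beta)$ are strictly interior to that region by hypothesis), and then pass the positivity of $\phi\,qp$ through this $C^0$-closeness. A secondary point needing care is that on $I_+(E)$ the cue ${\bf q}(t)$ may reach $t=\alpha$ only asymptotically if $E$ is an endpoint case; but the hypotheses place ${\bf q}(\alpha)$ strictly between $0$ and $q_{\rm max}$ with $p(\alpha)>0$, so the segment $t\le\alpha$ is genuinely traversed and the $H_0=0$ identification is legitimate. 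Everything else — the by-parts step \eqref{Hshift}, the conservation of $H_0$ off $[\alpha,\beta]$, and the contradiction via $H_0=0$ for bound states — is routine given the structure already set up in Sections~2–5.
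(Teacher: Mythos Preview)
Your argument is correct and essentially coincides with the paper's. Both track the perturbed trajectory $\tilde{\bf q}(t)$ starting at ${\bf q}(\alpha)\in I_+(E)$ and show that after the pulse it lands off the loop $H_0=0$, with the deviation given by $2\lambda\int_\alpha^\beta \phi\,\tilde q\,\tilde p\,dt$; the paper simply reparametrises this integral by $q$ (writing $\tilde p^{\,2}/2-p^{2}/2=2\lambda\int_{q_1}^{q_2}\phi(t(q))\,q\,dq$, which is the same quantity since $p^{2}/2=-Eq^{2}+\varepsilon F(q^{2})$ on the unperturbed loop), and then reads off the inside/outside dichotomy rather than invoking $H_0(\beta^+)\neq 0$ directly. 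One small remark: your integration-by-parts step uses $\dot\phi$, but you can avoid assuming differentiability of $\phi$ by computing $\dot H_0=2\lambda\phi\,qp$ straight from the equations of motion, which gives your formula \eqref{Hshift} without the detour.
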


\begin{proof}
Let $\tilde{{\bf q}}(t)$ be the integral trajectory of (\ref{c4}) for $V(t) = \lambda \phi (t)$ with $\tilde{{\bf q}}(\alpha) = {\bf q}(\alpha)$. If $\lambda$ is small enough , then $\tilde{{\bf q}}(t)$ is arbitrarily close to ${\bf q}(t)$ in $(\alpha,\beta)$ together with its 1-st derivative. In particular, $\tilde{q}(\beta)$ must be close to $q(\beta) \Rightarrow  0 < \tilde{q}(\beta)< q_{\rm max}$. Moreover, for $t\in(\alpha,\beta),$  $\tilde{{\bf q}}(t)$ intersects each vertical line $q=const$ exactly once, suggesting q as a convenient integration variable to compare both trajectories. Consider therefore  the $q$-interval
 [$q_{\rm 1}, q_{\rm 2}$] where $q_{\rm 1}= q(\alpha) = \tilde{q}(\alpha), 
  q_{\rm 2}= \tilde{q}(\beta) $ and  compare $\tilde{p}(q_{\rm 2})$ and $p(q_{\rm 2})$. 
The canonical eqs. (\ref{c4}) imply:
\beq{9.5}
\frac{dp}{dq} = \frac{-2Eq + 2\varepsilon f(q^{2})q}{p},~~~
\frac{d\tilde{p}}{dq} = \frac{-2Eq + 2\varepsilon f(q^{2})q 
+ 2\lambda\phi(t(q))q}{\tilde{p}}
\eeq

\beq{9.6}
\frac{d}{dq}\left(\frac{\tilde{p^{2}}}{2} - \frac{p^{2}}{2}\right)
= 2\lambda\phi(t(q))q \Rightarrow
\frac{\tilde{p}(q_{\rm 2})^{2}}{2} - \frac{p(q_{\rm 2})^{2}}{2} = 2\lambda\int_{q_{\rm 1}}^{q_{\rm 2}}
\phi(t(q))qdq .
\eeq

If now $\lambda > 0$ then ${\tilde{p}}(q_{\rm 2}) > p(q_{\rm 2}) $ and the point
$(q_{\rm 2}, {\tilde p}(q_{\rm 2})) = \tilde{{\bf q}}(\beta)$ ends up in the outer 
region of the loop $I_{\pm}(E).$ To the contrary, if $\lambda < 0 $ and $|\lambda |$ small, then $\tilde{{\bf q}}(\beta)$ falls into the interior of the loop.
In both cases, ${\bf q}(\beta)$ for $t \ge \beta$ originates a periodic trajectory which circulates either in the external or internal region, without ever returning to 0.
\end{proof}

Intuitively, no  matter  the 
value of $E$, there is no localization (macrostate)
with an infinitesimal 
pulse  under one of its cues. The presence of $V(t)$, (no  matter  how 
tiny)  
must displace the orbit out of  the  `squeezing  lemniscate' $I=I_{\pm}(E)$, 
 creating an unlimited  trajectory and a class of macrostates 
disappears, (a  `delicate condition' of macrostates due to the fact that they exist on the  
threshold of the symmetry breaking. Note that, our proposition adds only 
some details to
the sequence of theorems on the perturbations of homoclinic
orbits (see \cite[Sec.~4.5]{Holms} and the literature given there). 
The recent results in the theory of Bose condensation \cite{Boson1,Boson2,Sack,Dum,Bare} are interpretable as an indirect consequence of the same mathematical theorems \cite{Holms}.

A macrostate can survive the perturbation if the `little obstacle' is  under  its  `mass 
center'. Geometrically, it means that   
the pulse $V(t)$  just  modifies  the outer part of $I_{\pm}(E)$ 
in vicinity of $q_{\rm max}$. It can also create
a bi-localized state by establishing a bridge between two upper or two lower
branches of $I_{\pm}(E)$ (see Fig.~\ref{fig5}). As a result, the  entire  
`macrostate  park' is different in presence of  an  arbitrarily  
small  $V(t)$.  

      The metamorphosis is even more radical in presence of two little 
and widely separated potential pulses, $V(t)  = V_{1}(t) + V_{2}(t)$,
 too small  to  have  any  bound  states in the traditional sense.  Indeed, 
consider again a classical trajectory which 
starts to `paint a macrostate' to the left of the left pulse $V_{1}$.  
Then it may happen that $V_{1}$ deflects the `classical point'
(\ref{s2}-3) from the 
lemniscate $I_{\pm}(E)$, to the internal (or  external) loop $H$=const, 
where it starts to circulate, but the second pulse $V_{2}$, turns 
it back to $I_{\pm}(E)$ where it is finaly driven to 0.
The resulting macrostate  has 
the form of a multi-localization, a stationary state which could
not exist at all in the collection of vacuum states. Now, it  is 
created just by two tiny `potential clips' (see  Fig.~\ref{fig5}).
Since the stationary states are natural reference points for the general 
motion, one might expect that a bifurcation  in the sets of macrostates
must also affect the evolution of nonstationary packets.
We shall see below that this is indeed the case.
\begin{figure}[H]
\centering
\includegraphics[width=0.72\linewidth]{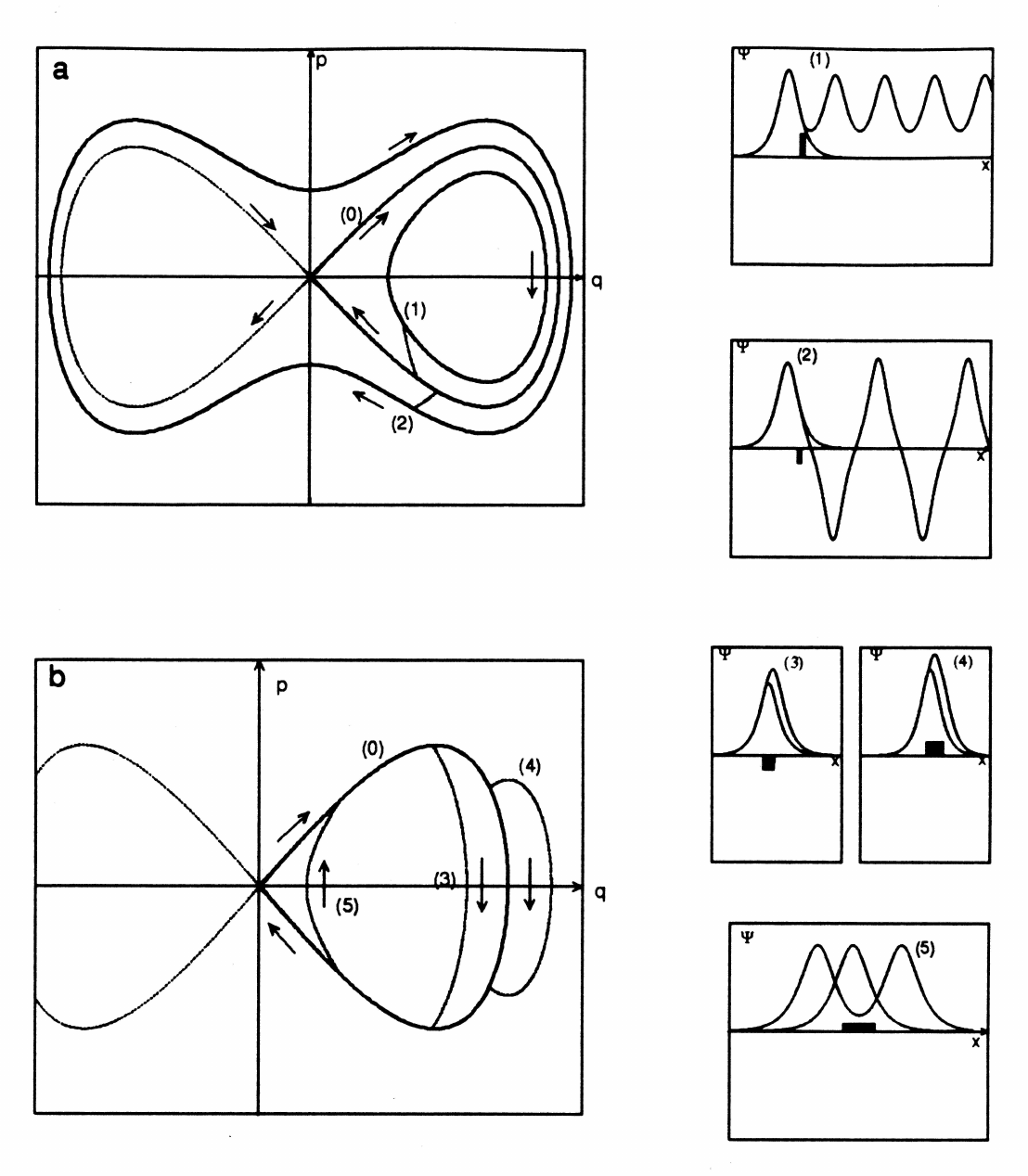}
\caption{
The dynamical symmetry breaking caused by an arbitrarily
tiny pulse $V(t)$ makes impossible the existence
of a macrostate with $V(t)$ under one of the cues: (1) a little
positive $V(t)$ (potential barrier) applied to the canonical
trajectory departing from the origin at $t = -\infty$ kicks
the trajectory into the internal closed loop $H$ = const $<$ 0,
where it starts to circulate endlessly without returning to the
origin. (2) a negative $V(t)$ (potential well) pushes.
the trajectory toward an external lemniscate $H$ = const $>$ 0,
where it again circulates, without returning to the origin.
(3-4) The macrostates with $V(t)$-pulse collocated under the center
of the wave packet are possible either for the well
or barrier. (5) A new macrostate in form of a
bi-localization can be also created by arbitrarily tiny potential 
pulse.}
\label{fig5}
\end{figure}     
\begin{figure}[H]
\centering
\includegraphics[width=0.9\linewidth]{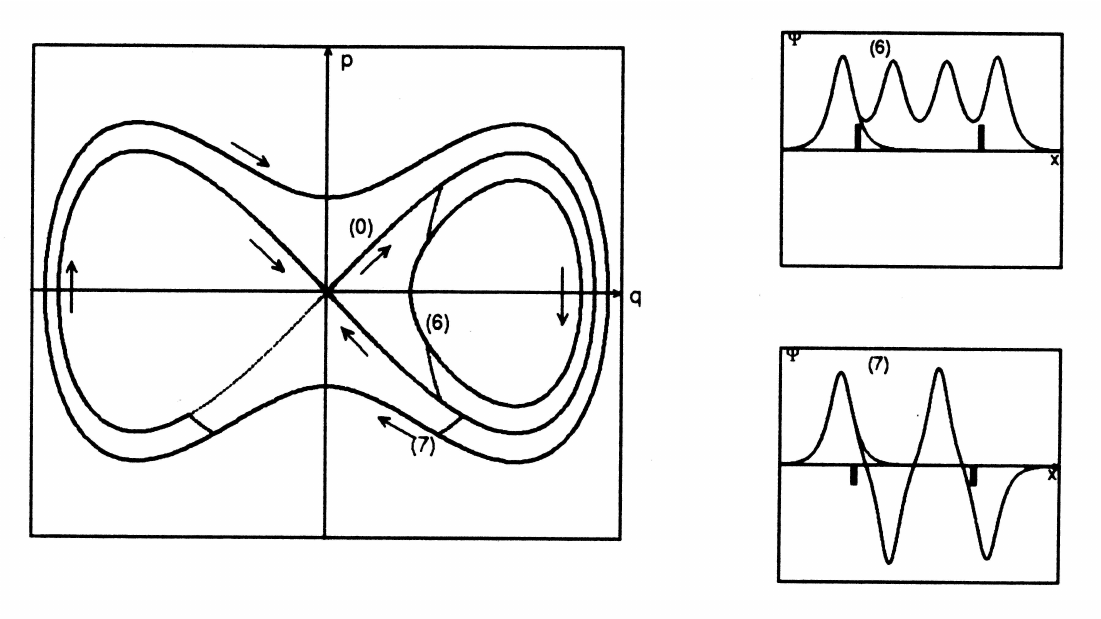}
\caption{
A pair of arbitrary weak barriers (wells) permits the 
multi-soliton states: (6) a multisoliton created by several circulations
of the classical trajectory on the internal $H$-curve, `clipsed' by two
tiny potential barriers, (7) an analogous multi-soliton created by
an external circulation, is maintained by two tiny wells.}
\label{fig6}
\end{figure}     

\section{The non-stationary states.}
\setcounter{equation}{0}

The stationary solutions have a guiding role in linear theories.
Would the same be true in the non-linear case? 
To check this we  have  returned  to  
(\ref{p1}) and applied numerical techniques to examine some 
general, non-stationary waves $\Psi(x,t)$ (with $t$ meaning 
again the time and $x$ the space coordinate). We  were  specially  
curious to see what happens to the initial macrostate (\ref{9.3}) 
or (\ref{9.4}) in presence of a very little potential pulse $V(x)$:
\beq{10.1}
V(x) = \left\{
\begin{array}{l}{\displaystyle 
V_{o}\left[1 - \frac{(x-x_{v})^{2}}{\sigma^{2}}\right]^{2},
~~~|x-x_{v}|\le\sigma}\\[1ex]
0,~~~|x-x_{v}|>\sigma
\end{array}\right.
\eeq
We have first taken a small $V_{o} < 0$ and $x_{v} < 0$, to
represent a little 
potential well situated under the left vanishing cue of the initial 
macrostate, and we have examined the evolution of the packet 
$\Psi(x,t)$ in the initial form of Gausson for the non-linearity (\ref{9.2}). 
The result is curious: the packet $\Psi(x,t)$ is  first  attracted 
toward the well, then starts to perform around it a sequence of 
decaying oscillations loosing an `excess of matter' and tending 
slowly to a new equilibrium state (of smaller norm), right on 
the center of the well;  see  Fig.~\ref{fig7}a. Its asymptotic 
form is therefore affected by an arbitrarily small $|V_{o}|$.
The similar effect is observed for the
Zakharov soliton, see Fig.~\ref{fig7}b-e. Evidently, while there is no 
discontinuity of the finite time evolution of the
soliton (Gausson) due to  
the  influence of $V(t)$ there is a discontinuity (bifurcation)
in its asymptotic 
form (meaning that the limiting transitions $V\rightarrow 0$ and 
$t\rightarrow +\infty$ do not commute). 

Our next experiment involved the Gausson initially situated almost
upon the center of a little potential barrier. As turns out the 
simulation can generate several scenarios, the most interesting
one is the Gausson splitting illustrated on Fig.~\ref{fig8}c. 
(In our computer simulations we also observed an analogous phenomenon 
for the traditional Zakharov soliton). The splitting occurs as well 
for the traveling soliton of the initial form,
\beq{eq:8.2}
\Psi(x,0) = \psi_{S}(x)\exp(ikx),
\eeq
colliding with the potential barrier (where $\psi_{S}$ is given 
by \ref{9.3}). As turns out,
the too slow soliton is totally reflected and too quick
soliton is totally transmitted, Fig.~\ref{fig9}; the splitting
occurs for intermediate packet velocities. We conclude that
the solitons, while stable in mutual collisions, can loose stability 
in presence of external potentials.

Our last experiment was to examine the influence of two tiny
potential wells placed symmetrically under two cues of the
initial Zakharov soliton. Our calculations show that the
bi-soliton stationary state (Fig.~\ref{fig6}) dictates the
asymptotic form of the non-stationary wave $\Psi(x,t)$, see
Fig.~\ref{fig10}. We thus see that 
while a tiny pulse of the external potential cannot affect the
continuity of the macrostate evolution, 
it can however cause the bifurcation
of their asymptotic forms consistently with our idea of the
non-linear spectral phenomenon. It is interesting to notice, 
that the bifurcations 
which we are describing arised already in some applied areas. 
Thus, the metamorphosis of the localized stationary state into 
a `respiring lump' 
resembles phenomenon predicted in physics of boson condensation 
(see \cite{Sack}); likewise the split soliton seems an abstract 
equivalent of an effect discused in \cite{Dum}, both belonging 
to the newly emerging side of the soliton theory.

\begin{figure}[H]
\centering
\includegraphics[width=\linewidth]{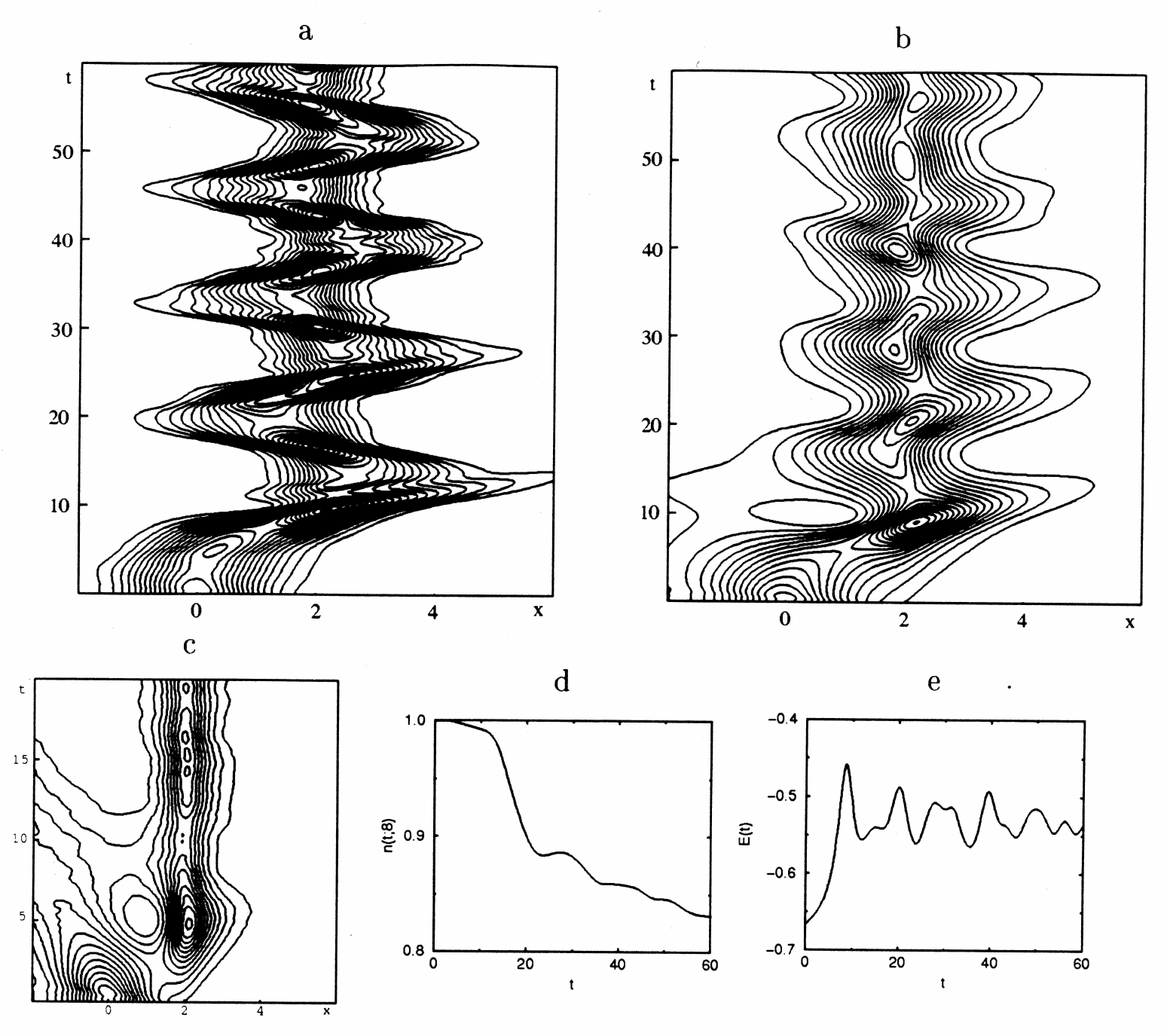}
\caption{
Effects of a little well~\eqref{10.1} placed under the cue 
of `macrostate' $\Psi(x,0)$. The evolving packet represented by 
the izolines of $|\Psi(x,t)|^{2}$ tries to find an
equilibrium state upon the center of the well (eq., 
$x_{v}$ = 2, $\sigma$ = 0.5, $V_{o}$ = -0.75): (a) $\Psi(x,0)$ is a 
Gausson~\eqref{9.4}; (b) $\Psi(x,0)$ represents the 
soliton in Zakharov equation~\eqref{9.3}; (c) a detail of the Zakharov 
process (for $V_{o}$ = -2) shows probable emission of a part of the 
soliton substance when tending to an equilibrium state; 
(d) the time evolution of the partial norm $n_{[-8,10]}(t)$ 
eq.~\eqref{m3} confirms the previous conclusion; (e) the generalized 
frequency parameter $E$ given by~\eqref{m4} tends to a new value 
for the new equilibrium state.}
\label{fig7}
\end{figure}
\begin{figure}[H]
\includegraphics[width=\linewidth]{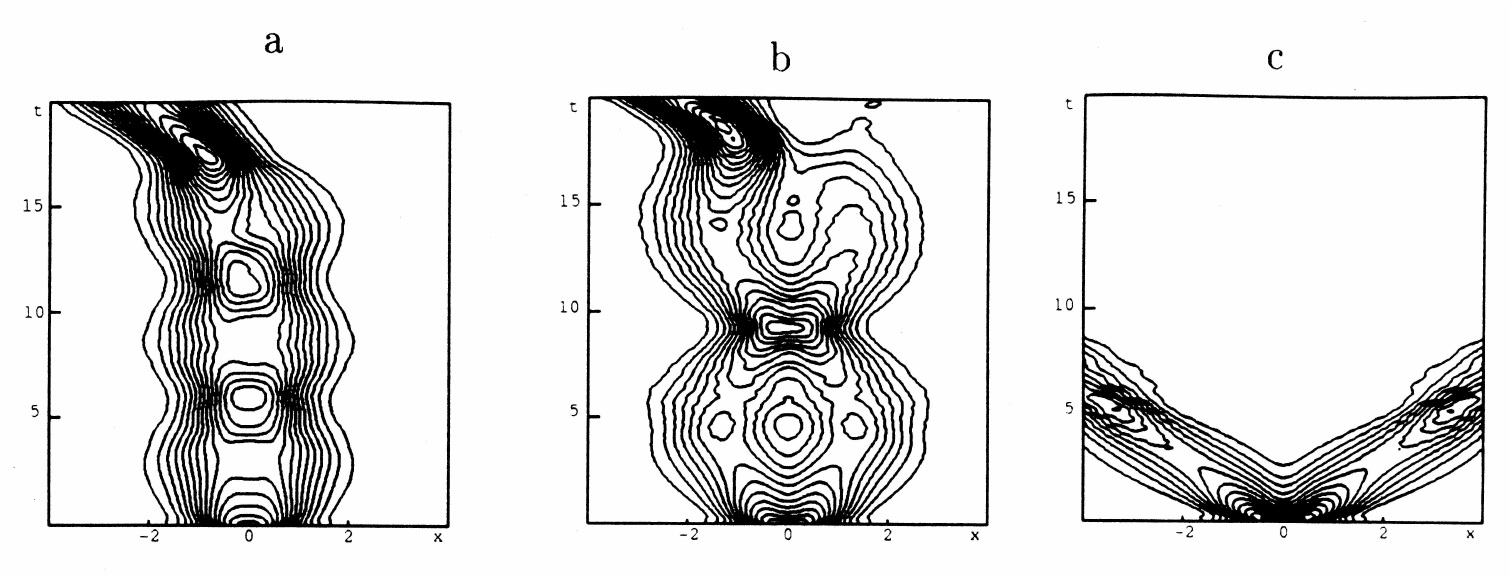}
\caption{
Cases of Gausson evolution in presence of the
barrier~\eqref{10.1} centered almost at the maximum of the
initial $\Psi(x,0)$ ($x_{v}$ = 0.001, $\sigma$ = 0.5).
(a) For  low $V_{o}$ = 0.5 the Gausson hesitates but than deflects to
the left without loosing integrity, (b) an analogue phenomenon
for $V_{o}$ = 1 suggests an emission of the Gausson substance,
(c) the barrier $V_{o}$ = 2 causes the new phenomenon of
Gausson splitting.}
\label{fig8}
\end{figure}
\begin{figure}[H]
\includegraphics[width=\linewidth]{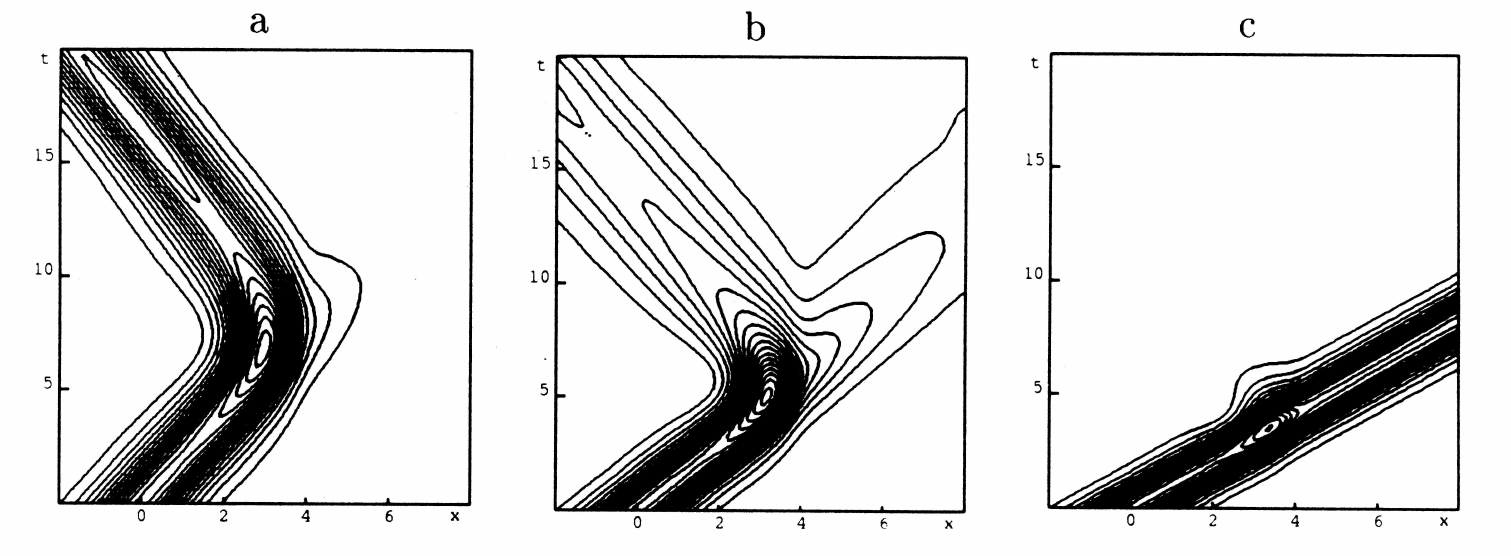}
\caption{
The evolution of the Zakharov soliton $\Psi(x,0) = \psi_{Z}(x)\exp(ikx)$
colliding with the potential barrier of form~\eqref{10.1} with $x_{v}$=4,
$\sigma$ = 0.5: (a) for $k$ = 0.5, $V_{o}$ = 1 the soliton is totally
reflected; (b) for $k$ = 0.7, $V_{o}$ = 1 the soliton splits; (c) for
a slightly faster soliton $k$ = 1 and lower barrier $V_{o}$ = 0.5 the
packet is totally transmitted.}
\label{fig9}
\end{figure}
\begin{figure}[H]
\includegraphics[width=\linewidth]{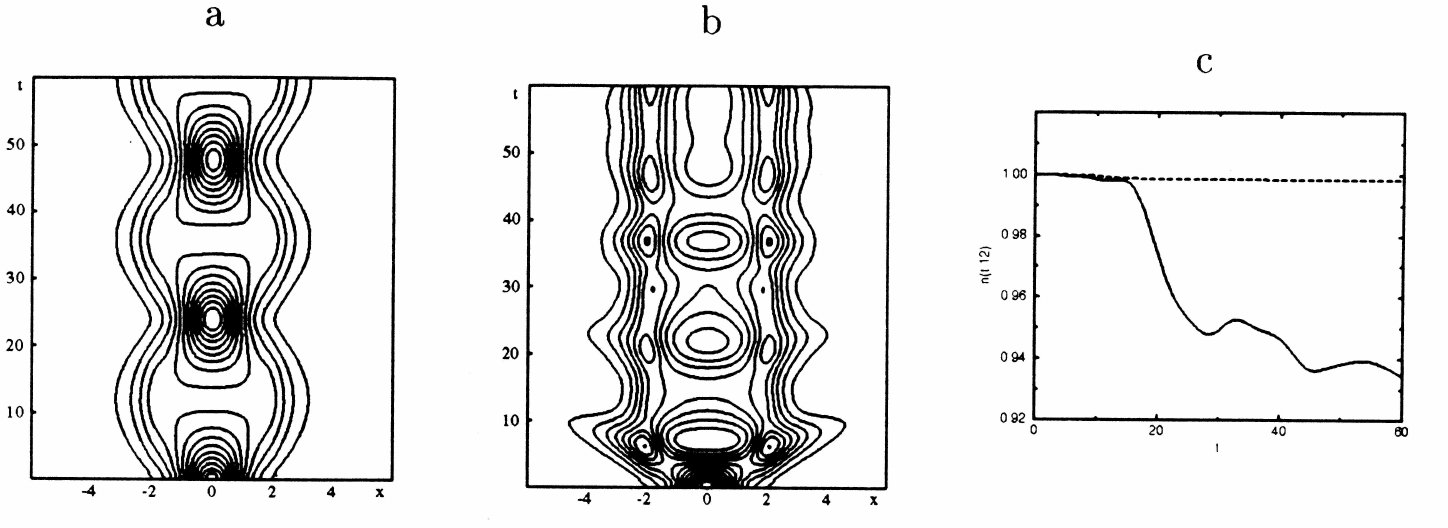}
\caption{
The evolution of the Zakharov `macrostate'~\eqref{eq:8.2} in presence of
two potential wells of form (eq. , $\sigma$ = 0.5) placed at 
$x = \pm 2$: (a) for $V_{o}$ = -0.5 the packet performs
quadrupole oscillations around
wells; (b) for $V_{o}$ = -1 it appears to tend to the stationary
bi-soliton state; (c) the partial norms of the soliton $n_{[-12,12]}(t)$
for the both processes suggests that the new equilibrium in the case (b)
is achieved at the cost of emitting an excess of substance.}
\label{fig10}
\end{figure}

\section*{Acknowledgments}

The authors are grateful to their colleagues at ICM and
Institute of Theoretical Physics, Warsaw University, Warsaw,
Poland, and in Departamento de Fisica, CINVESTAV, Mexico, for
their interest in the subject and pertinent discussions.
One of us (BM) is grateful to professors C.V.~Stanojevic and
W.O.~Bray for their
kind invitation to the VI-th IWAA, Maine, US, June 1997,
where a part of this work has been presented. Two of us (PG and WK) 
are indebted for the kind invitation to the Departamento de F{\'\i}sica, 
CINVESTAV, Mexico. The work was partially supported by the Polish
State Committee for Scientific Research . 
The essential contribution of the computer facilities of ICM is 
acknowledged.

\section*{Appendix: Numerical Algorithms}
\newcommand{\iniappen}
{\renewcommand{\theequation}{A.\arabic{equation}}}
\iniappen
\setcounter{equation}{0}
For numerical solution of the equation (\ref{p1}) we apply  
a very simple discretization of the
space and time preserving the fundamental symplectic character
of the dynamics. The equation is represented in form 
of the classical Hamiltonian equations for a continuous medium,
\beq{m1}
\pder{Q}{t} = \frac{\delta {\cal H}}{\delta P},~~~
\pder{P}{t} = -\frac{\delta {\cal H}}{\delta Q}.
\eeq
The real numbered functions $Q(x,t)$ and $P(x,t)$ represent
real and imaginary part of $\Psi(x,t)$ and ${\cal H}$ is the 
following functional,
{\beq{m2}
{\cal H}\left[Q,P\right] 
= \frac{1}{2}\int\left[-\frac{Q}{2}\frac{d^{2}Q}{dx^{2}} 
- \frac{P}{2}\frac{d^{2}P}{dx^{2}} 
+ (Q^{2} + P^{2})V + \varepsilon F(Q^{2}+P^{2})\right]dx.
\eeq}
Note that values of ${\cal H}$
and of the norm of the wave function are preserved in the
evolution. The functions $Q$ and $P$ are represented
on the finite regular grid of points in the domain of $x$ and
the Laplacian is approximated with a finite difference formula.
Consistently, the equations (\ref{m1}) are transformed into 
ordinary Hamiltonian equations for many degrees of freedom. 
In practice we apply the grid $x\in[-16,16]$ with 801 points. An
appropriately extended grid is required for simulations of the 
traveling packets. In all cases where the emission of the 
`packet substance' is reported we additionally applied the
technique of absorbing boundaries on the edges of the grid.
This representation does not produce
any substantial artifacts of the space discretization. For the
time integration we use the implicit second order Range-Kutta 
method which is strictly symplectic and appropriate for
classical Hamiltonians which are not separable into a kinetic 
and potential part \cite{Sanz94}. The integration time step 
is $2\times10^{-4}$ ensuring absolute stability and elimination 
of any substantial artifacts of the time discretization. In order
to analyze the evolution of the packet we introduce the norm
$N$ and the partial norm $n_{L}$:
\beq{m3}
n_{L}(t) = \frac{1}{N}\int_{L}|\Psi(x,t)|^{2}dx,~~~
N = \int|\Psi(x,t)|^{2}dx, 
\eeq
and the field functional $E(t)$ generalizing the eigenvalue $E$
of (3.2) to the nonstationary solutions,
\beq{m4}
E(t) = \frac{1}{N} \int\Psi^{\ast}(x,t)\left[
-\frac{1}{2}\pder{^{2}}{x^{2}} + V(x) + 
\varepsilon f(|\Psi(x,t)|^{2})\right]\Psi(x,t)dx.
\eeq
The simulations have been performed with help of the
fortran program with 48-bits representation of real 
numbers running on Cray Y-MP/4E.

\end{document}